\newtheorem{theorem}{Theorem}[section]
\newtheorem{proposition}[theorem]{Proposition}
\newtheorem{lemma}[theorem]{Lemma}
\newtheorem{conjecture}[theorem]{Conjecture}
   \newcommand{\ba}{\begin{eqnarray}}
   \newcommand{\na}{\end{eqnarray}}
   \newcommand{\ban}{\begin{eqnarray*}}
   \newcommand{\nan}{\end{eqnarray*}}
\newcommand{\bC}{{\mathbb C}}
\newcommand{\bQ}{{\mathbb Q}}
  \newcommand{\<}{\langle}
  \renewcommand{\>}{\rangle}
\newcommand{\suml}{\sum\limits}
\begin{document}

\title{On a conjectural solution to open KdV and Virasoro}

\author[Hua-Zhong Ke]{Hua-Zhong Ke}
  \address{Department of Mathematics\\  Sun Yat-Sen University\\
                        Guangzhou,  510275\\ China }
  \email{kehuazh@mail.sysu.edu.cn}

\begin{abstract}

In this note, we present a recursive formula for the full partition function $Z$ of descendent integrals over moduli spaces of open and closed Riemann surfaces, assuming the conjecture recently proposed by Pandharipande, Solomon and Tessler that $Z$ satisfies the open KdV and Virasoro equations.


\end{abstract}

\date{\today}
\maketitle
\tableofcontents

\section{Introduction}

The study of intersection theory on moduli spaces of closed Riemann surfaces has a long history. Let $\overline M_{g,l}$ be the Deligne-Mumford moduli space of stable pointed curves of genus $g$ with $l$ distinct nonsingular marked points, which has (complex) dimension $3g-3+l$. Note that it is nonempty if and only if $2g-2+l>0$ (stability condition). Let $\psi_i$ be the first Chern class of the cotangent line bundle at the $i$-th marked point. The descendent integrals are defined by
\ban
\<\tau_{a_1}\cdots\tau_{a_l}\>_g^c=\int_{\overline M_{g,l}}\psi_1^{a_1}\cdots\psi_l^{a_l}.
\nan
Consider the generating series
\ban
F^c(u;t_0,t_1,\cdots)=\sum\limits_{g=0}^\infty u^{2g-2}\<\exp(\sum\limits_{i=0}^\infty t_i\tau_i)\>_g^c.
\nan
Witten conjectured that the partition function $Z_{WK}=\exp(F^c)$ is a tau-function of the KdV hierarchy \cite{W}, which was first proved by Kontsevich \cite{K}. Dijkgraaf, Verlinde and Verlinde showed that Witten's conjecture is equivalent to Virasoro constraints \cite{DVV}. Using Virasoro constraints, Alexandrov found a recursive formula for $Z_{WK}$ in terms of a cut-and-join type differential operator \cite{A}, by assigning suitable degrees to the variables in a clever way. Inspired by this result, Zhou was able to obtain an explicit formula for $Z_{WK}$ \cite{Z}.

The study of intersection theory on moduli spaces of open Riemann surfaces was initiated only recently by Pandharipande, Solomon and Tessler  \cite{PST}. They have constructed the compactification of moduli spaces of open Riemann surfaces in the case $g=0$, and defined a theory of open descendent integration for pointed disks, which was also solved completely. Although the full theory of open descendent integration is not constructed yet, Pandharipande et al. conjectured that the full partition function $Z$ satisfies two different types of equations, which are open analog of KdV and Virasoro equations. It is important to determine geometrically interesting solutions to these equations. Buryak has shown that these two types of equations are equivalent \cite{B}. 

In this note, assuming that $Z$ satisfies open Virasoro equaitions, we follow the idea of \cite{A} to obtain a recursive formula for $Z$ (Proposition \ref{recursiveformula}). We will investigate the explicit formula for $Z$ in the future.

\section{Moduli of open Riemann surfaces}

In this section, we briefly describe some basic materials on the moduli spaces of open Riemann surfaces. We refer readers to \cite{PST} for more details.

Let $\Delta\subset\bC$ be the open unit disk, and let $\bar\Delta$ be its closure. An extendable embedding of $\Delta$ in a closed Riemann surface $C$ is a holomorphic embedding $\Delta\hookrightarrow C$ which extends to a holomorphic embedding of an open neighborhood of $\bar\Delta$ in $C$. Two extendable embeddings in $C$ are said to be disjoint if the images of $\bar\Delta$ are disjoint.

An open Riemann surface is a Riemann surface with boundary $(X,\partial X)$, which is obtained by removing finitely many disjoint extendably embedded open disks from a connected closed Riemann surface. The boundary $\partial X$ is the union of images of the unit circle boundaries of embedded disks.

Given an open Riemann surface $(X,\partial X)$, we can canonically construct the double $D(X,\partial X)$ via Schwarz reflection through the boundary, which is a connected closed Riemann surface. The double genus of $(X,\partial X)$ is defined to be the usual genus of $D(X,\partial X)$.

On an open Riemann surface $(X,\partial X)$, we consider two types of marked points. The marked points of interior type are points of $X\setminus\partial X$, and the marked points of boundary type are points of $\partial X$.

Let $M_{g,k,l}$ be the moduli space of open Riemann surfaces of doubled genus $g$ with $k$ distinct boundary marked points and $l$ distinct interior marked points, which is a real orbifold with real dimension $3g-3+k+2l$. The moduli space is not compact, and may be disconnected and non-orientable. Note that it is nonempty if and only if $2g-2+k+2l>0$ (stability condition).

Note that the cotangent line bundles on $M_{g,k,l}$ at interior marked points are well-defined. The authors of \cite{PST} do not consider contangent line bundles at boundary marked points. Naively, open descendent integrals are expected to be defined by
\ban
\<\tau_{a_1}\cdots\tau_{a_l}\sigma^k\>_g^o=\int_{\overline M_{g,k,l}}\psi_1^{a_1}\cdots\psi_l^{a_l},
\nan
where $\overline M_{g,k,l}$ is a suitable compactification of $M_{g,k,l}$, and $\psi_i$ is the first Chern class of the cotangent line bundle at the $i$-th interior marked point. 

To define the open descendent integrals rigorously, at least three significant steps must be taken:
\begin{itemize}
\item A natural compactification $M_{g,k,l}\subset\overline M_{g,k,l}$ must be constructed, since $M_{g,k,l}$ is not compact.
\item For integration over $\overline M_{g,k,l}$ to be well-defined, boundary conditions of the integrands along $\partial\overline M_{g,k,l}$ must be specified.
\item Orientation problems for $\overline M_{g,k,l}$ must be solved, since $M_{g,k,l}$ may be non-orientable.
\end{itemize}
The authors of \cite{PST} have completed the above steps in the case $g=0$ and obtained a complete description of open descendent integrals for pointed disks. 

Consider the generating series
\ban
F^o(u;s;t_0,t_1,\cdots)&=&\sum\limits_{g=0}^\infty u^{g-1}\<\exp(s\sigma+\suml_{d=0}^\infty t_d\tau_d)\>_g^o.
\nan
Though the full theory of open descendent integration is not constructed yet, the authors of \cite{PST} conjectured that the full partition function $Z=\exp(F^c+F^o)$ satisfies open KdV and Virasoro equations. 

\section{Derivation of the recursive formula}

The Virasoro constraint for $Z_{WK}$ is as follows:
\ban
\frac{(2n+3)!!}{2^{n+1}}\partial_{n+1}Z_{WK}=\hat{L}_nZ_{WK},\quad n\geqslant -1,
\nan
where
\ban
\hat{L}_n&=&\sum\limits_{m=0}^\infty\frac{\Gamma(m+n+\frac{3}{2})}{\Gamma(m+\frac{1}{2})}t_m\partial_{m+n}+\frac{u^2}{2^{n+2}}\sum\limits_{a+b=n-1}(2a+1)!!(2b+1)!!\partial_a\partial_b\\
&&\quad+\frac{1}{2u^2}\delta_{n,-1}t_0^2+\frac{1}{16}\delta_{n,0},
\nan
and $\partial_a=\frac{\partial}{\partial t_a}$. $Z_{WK}$ is uniquely determined by the Virasoro equations and the initial data $Z_{WK}|_{t_0=t_1=\cdots=0}=1$.

$s$ extension of the above constraints gives the open Virasoro constraint for $Z$. 
\begin{conjecture}\label{virasoro}(Conjecture 1 in \cite{PST})
The full partition function $Z=\exp(F^c+F^o)$ satisfies the following equations:
\ban
\frac{(2n+3)!!}{2^{n+1}}\partial_{n+1}Z=\hat{\mathscr L}_nZ,\quad n\geqslant -1,
\nan
where each $\hat{\mathscr L}_n$ is an $s$ extension of $L_n$:
\ban
\hat{\mathscr L}_n=\hat L_n+u^ns\frac{\partial^{n+1}}{\partial s^{n+1}}+\frac{3n+3}{4}u^n\frac{\partial^{n}}{\partial s^{n}}.
\nan
\end{conjecture}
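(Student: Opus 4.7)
The plan is to prove Conjecture \ref{virasoro} in two stages: first, to complete the geometric construction of open descendent integrals for all doubled genera $g$; second, to derive the Virasoro equations from the geometry of forgetful morphisms. Since Buryak has shown open KdV and open Virasoro to be equivalent, either system could be targeted, but open Virasoro is better suited to a direct geometric argument because its operators admit a transparent interpretation via the comparison $\psi_i=\pi^\ast\psi_i+D_i$ on forgetful maps, exactly as in the closed case.

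The first stage consists of carrying out in arbitrary doubled genus the three steps flagged in Section 2: construct a compactification $\overline M_{g,k,l}$, specify boundary conditions along $\partial\overline M_{g,k,l}$ making the open descendent integrals well-defined independently of the extension of the $\psi$-classes, and resolve the potential non-orientability via an orientation line bundle attached to the relative tangent complex. Concretely, I would try to build a real-oriented analogue of the Deligne-Mumford compactification from nodal degenerations of open Riemann surfaces, with boundary data prescribed by the restriction of the $\psi$-classes to the real codimension-one strata. This is the principal obstacle: it is exactly the construction that \cite{PST} have completed only for $g=0$, and pushing it to higher doubled genus will very likely require substantial input from symplectic field theory or open Gromov-Witten theory.

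The second stage assumes the first and derives $\hat{\mathscr L}_nZ=\tfrac{(2n+3)!!}{2^{n+1}}\partial_{n+1}Z$. The closed part $\hat L_n Z_{WK}$ is known from \cite{DVV}: one pushes forward along the forgetful morphism dropping an interior marked point carrying $\psi^{n+1}$, and the quadratic term comes from boundary divisors encoding nodal degenerations of the double. The new pieces $u^ns\,\partial_s^{n+1}$ and $\tfrac{3n+3}{4}u^n\,\partial_s^n$ should arise from the analogous forgetful morphism dropping a boundary marked point: the first from the linear insertion mechanism of a boundary point and the second from a dilaton-type identity on $\partial\overline M_{g,k,l}$, with the coefficient $\tfrac34$ ultimately reflecting the real dimension one of a boundary point against the complex dimension one of an interior point, and the power $u^n$ (rather than $u^{2n}$) reflecting the half-integer genus grading present in $F^o$. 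As an essential sanity check I would first verify the cases $n=-1$ (open string equation) and $n=0$ (open dilaton equation) against the disk computations of \cite{PST}, and then extend by induction using the commutation relations $[\hat{\mathscr L}_m,\hat{\mathscr L}_n]=(m-n)\hat{\mathscr L}_{m+n}$ to reduce to $n\in\{-1,0\}$ together with two higher generators.

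If the geometric route proves intractable, a plausible fallback is to exhibit a Kontsevich-type matrix integral representing $Z$, with an extra one-dimensional integration encoding the boundary parameter $s$, and to read off the operators $\hat{\mathscr L}_n$ from the Ward identities of that integral. In either approach the heart of the matter is the construction of $\overline M_{g,k,l}$; once it is in place, the Virasoro constraints should follow by relatively standard manipulations modeled on \cite{DVV}.
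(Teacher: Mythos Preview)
The statement you are attempting to prove is \emph{not} proved in the paper. It is stated explicitly as a conjecture (Conjecture~\ref{virasoro}, quoting Conjecture~1 of \cite{PST}) and is assumed throughout the rest of the paper; the phrase ``Assuming Conjecture~\ref{virasoro}'' precedes every subsequent derivation. There is therefore no proof in the paper against which to compare your proposal.

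What you have written is not a proof but a research plan, and you candidly identify its central gap yourself: the construction of $\overline M_{g,k,l}$ for arbitrary doubled genus, together with the boundary conditions and orientation data, is precisely the missing geometric input that \cite{PST} completed only for $g=0$. Your second stage presupposes the first, and the first stage is the open problem. The heuristic account of how the extra terms $u^ns\,\partial_s^{n+1}$ and $\tfrac{3n+3}{4}u^n\,\partial_s^n$ should arise from forgetful maps at boundary points is plausible but remains speculative until the moduli spaces exist; in particular, the dilaton-type justification of the coefficient $\tfrac{3}{4}$ is not derived, only motivated. The appeal to the Virasoro commutation relations to reduce to low $n$ is sound in principle, but it still requires independent verification of enough generators to span, and those verifications again rest on the unconstructed $\overline M_{g,k,l}$. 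In short, your proposal correctly locates where the difficulty lies, but it does not overcome it, and the paper makes no claim to do so either.
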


Assuming Conjecture \ref{virasoro}, $Z$ is uniquely determined by the open Virasoro equations and the initial data $Z|_{t_0=t_1=\cdots=0}$. Note that by the stability conditions and dimension constraints, we have
\ban
F^c|_{t_0=t_1=\cdots=0}&=&0,\\
F^o|_{t_0=t_1=\cdots=0}&=&\frac{\<\sigma^3\>_0^o}{6} u^{-1}s^3,
\nan
where $\<\sigma^3\>_0^o=1$ as computed in \cite{PST}. This gives the initial data
\ba\label{initialdata}
Z|_{t_0=t_1=\cdots=0}=\sum\limits_{m=0}^\infty\frac{u^{-k}s^{3k}}{k!6^k}.
\na

Set deg$u=0$ and deg$t_n=\frac{2n+1}{3}$ as in \cite{A}, and set deg$s=\frac 23$.  Then $\hat{\mathscr L}_n$ is homogeneous of degree $-\frac{2n}{3}$. Moreover, note that $\<\tau_{a_1}\cdots\tau_{a_l}\>_g^c\neq0$ only if
\ban
a_1+\cdots+a_l=3g-3+l,
\nan
which gives
\ban
\frac{2a_1+1}{3}+\cdots+\frac{2a_l+1}{3}=2g-2+l,
\nan
and $\<\tau_{a_1}\cdots\tau_{a_l}\sigma^k\>_g^o\neq0$ only if 
\ban
2a_1+\cdots+2a_l=3g-3+k+2l,
\nan
which gives
\ban
\frac{2a_1+1}{3}+\cdots+\frac{2a_l+1}{3}+\frac{2k}{3}=g-1+k+l.
\nan
So $F^c,F^o\in\bQ[u^{-1},u][[s,t_0,t_1,\cdots]]$ have positive integral degrees by the stability condition. The low degree terms can be computed recursively:
\ban
F^c&=&(u^{-2}\frac{t_0^3}{6}+\frac{t_1}{24})+(u^{-2}\frac{t_0^3t_1}{6}+\frac{t_0t_2}{24}+\frac{t_1^2}{48})+\textrm{higher degree terms},\\
F^o&=&(u^{-1}st_0+\frac{t_1}{2})+(u^{-1}\frac{s^3}{6}+u^{-1}st_0t_1+\frac{t_0t_2}{2}+\frac{t_1^2}{4})+\textrm{higher degree terms}.
\nan
We also have a natural grading for $Z\in\bQ[u^{-1},u][[s,t_0,t_1,\cdots]]$:
\ban
Z=\suml_{d=0}^\infty Z_d\textrm{ with }Z_0=1.
\nan

Let $E=\frac 23s\frac{\partial}{\partial s}+\sum\limits_{n=0}^\infty\frac{2n+1}{3}t_n\partial_n$ be the Euler vector field. Then we have
\ban
EZ_d=dZ_d,\quad d\geqslant 0.
\nan
Set
\ban
\mathscr W=\sum\limits_{n=0}^\infty\frac{2n+1}{3}t_n\cdot\frac{2^n}{(2n+1)!!}\hat{\mathscr L}_{n-1}.
\nan
Then $\mathscr W$ is a homogeneous operator of degree one, and Conjecture \ref{virasoro} gives
\ban
\mathscr WZ=(E-\frac 23s\frac{\partial}{\partial s})Z\Rightarrow\mathscr WZ_d=(d+1-\frac 23s\frac{\partial}{\partial s})Z_{d+1}.
\nan
For $d\geqslant 0$, define operators $\mathscr O_{d+1}$ as follows:
\ban
\mathscr O_{d+1}(\sum\limits_{k=0}^\infty c_ks^k)=\left\{\begin{array}{cc}\suml_{k=0}\frac{c_k}{d+1-\frac 23 k}s^k,&d\textrm{ even,}\\\\\frac{u^{-\frac{d+1}{2}}s^{\frac 32 (d+1)}}{(\frac{d+1}{2})!\cdot 6^{\frac{d+1}{2}}}+\suml_{\substack{0\leqslant k\leqslant\infty\\k\neq\frac 32 (d+1)}}\frac{c_k}{d+1-\frac 23 k}s^k,&d\textrm{ odd.}\end{array}\right.
\nan
\begin{lemma}
\ban
\mathscr O_{d+1}\mathscr WZ_d=Z_{d+1}.
\nan
\end{lemma}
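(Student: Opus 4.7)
The plan rests on the identity $\mathscr{W}Z_d=(d+1-\frac{2}{3}s\frac{\partial}{\partial s})Z_{d+1}$ already derived from Conjecture \ref{virasoro}, combined with a careful separation into cases according to whether $\frac{3(d+1)}{2}$ is a nonnegative integer. I would expand $Z_{d+1}=\sum_{k\geq 0}a_k s^k$ with $a_k\in\bQ[u^{-1},u][[t_0,t_1,\ldots]]$, and write $\mathscr{W}Z_d=\sum_{k\geq 0}c_k s^k$ similarly. The identity above becomes the coefficientwise relation $c_k=(d+1-\frac{2}{3}k)a_k$ for every $k$.

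For each $k$ with $d+1-\frac{2}{3}k\neq 0$, the coefficient $a_k$ is uniquely recovered as $c_k/(d+1-\frac{2}{3}k)$, which is precisely the nonresonant part of $\mathscr{O}_{d+1}(\mathscr{W}Z_d)$. When $d$ is even this already covers every index, since $\frac{3(d+1)}{2}$ is not an integer, so the lemma follows immediately in that case.

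The main point, and where I expect the real subtlety, is the resonant index $k_0=\frac{3(d+1)}{2}$, which is a nonnegative integer precisely when $d$ is odd. Then $c_{k_0}=0$ automatically, and $a_{k_0}$ is not determined by $\mathscr{W}Z_d$. To pin it down I would invoke homogeneity: since $s^{k_0}$ has grading degree $\frac{2}{3}k_0=d+1$, and $Z_{d+1}$ is homogeneous of that same degree, $a_{k_0}$ must be homogeneous of degree $0$ in the $t_i$'s, hence a constant depending only on $u$. In particular, $a_{k_0}$ equals the coefficient of $s^{k_0}$ in $Z_{d+1}|_{t_0=t_1=\cdots=0}$.

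Finally I would extract this constant from the initial data \eqref{initialdata}. When $d$ is odd, $d+1$ is even, so the degree-$(d+1)$ piece of $Z|_{t=0}=\sum_{j\geq 0}\frac{u^{-j}s^{3j}}{j!\,6^j}$ corresponds to $j=\frac{d+1}{2}$, giving
\[
a_{k_0}=\frac{u^{-(d+1)/2}}{\left(\frac{d+1}{2}\right)!\cdot 6^{(d+1)/2}},
\]
which is exactly the extra term built into the $d$-odd branch of $\mathscr{O}_{d+1}$. Matching coefficients term by term then yields $\mathscr{O}_{d+1}\mathscr{W}Z_d=Z_{d+1}$ in both parities. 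The only nontrivial step is recognizing that the resonance obstruction at $k_0$ is resolved precisely by the $\<\sigma^3\>_0^o=1$ input through the initial data.
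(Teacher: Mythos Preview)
Your proposal is correct and follows essentially the same route as the paper: split by parity of $d$, read off the nonresonant coefficients from the relation $\mathscr{W}Z_d=(d+1-\tfrac{2}{3}s\partial_s)Z_{d+1}$, and for odd $d$ recover the resonant coefficient $a_{k_0}$ via the degree constraint (forcing it to be independent of the $t_i$) together with the initial data \eqref{initialdata}. The paper's proof is the same argument in slightly different notation.
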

\begin{proof}
The case $d$ even is clear. When $d$ is odd, write
\ban
Z_{d+1}=\suml_{k=0}^\infty c_ks^k.
\nan
Then 
\ban
\mathscr WZ_d=\suml_{\substack{0\leqslant k\leqslant\infty\\k\neq\frac 32(d+1)}}(d+1-\frac 23k)c_ks^k.
\nan
By degree constraint, $c_{\frac 32(d+1)}$ does not contain $t_0,t_1,\cdots$. So from the initial data \eqref{initialdata}, we have 
\ban
c_{\frac 32(d+1)}s^{\frac 32(d+1)}=\frac{u^{-\frac{d+1}{2}}s^{\frac 32 (d+1)}}{(\frac{d+1}{2})!\cdot 6^{\frac{d+1}{2}}},
\nan
which proves the lemma.
\end{proof}

From the above lemma, we obtain the following recursive formula:
\begin{proposition}\label{recursiveformula}
\ban
Z=1+\suml_{d=1}^\infty(\mathscr O_d\mathscr W)\cdots(\mathscr O_1\mathscr W)1.
\nan
\end{proposition}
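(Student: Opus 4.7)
The plan is to derive the proposition directly from the preceding lemma by inducting on the homogeneous degree $d$. Since $Z_0 = 1$ and the lemma gives the identity $Z_{d+1} = \mathscr O_{d+1}\mathscr W Z_d$ for every $d \geqslant 0$, almost no further work is needed; the proposition is essentially a formal reformulation of the lemma.

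Concretely, the first step is to record the explicit form of each graded piece. Iterating the identity $Z_{d+1} = \mathscr O_{d+1}\mathscr W Z_d$ starting from $Z_0 = 1$ yields, by a straightforward induction on $d$,
\ban
Z_d = (\mathscr O_d \mathscr W)(\mathscr O_{d-1} \mathscr W)\cdots(\mathscr O_1 \mathscr W)\cdot 1, \quad d \geqslant 1.
\nan
The second step is to sum over $d$: using $Z = \sum_{d=0}^\infty Z_d$ and isolating the $d=0$ term as the leading $1$ on the right, one obtains precisely the recursive formula stated in the proposition.

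The only point that requires any care is the well-definedness of each composition $\mathscr O_{d+1}\mathscr W$ at the correct graded piece, but this obstruction has already been resolved inside the lemma. Namely, $\mathscr W$ is homogeneous of degree one, so $\mathscr W Z_d$ lies in the degree-$(d+1)$ piece on which $\mathscr O_{d+1}$ is designed to act, and the potential zero of the factor $d+1-\tfrac{2}{3}k$ at $k = \tfrac{3}{2}(d+1)$ (which occurs exactly when $d$ is odd) is precisely accounted for by the initial-data term $u^{-(d+1)/2}s^{3(d+1)/2}/[((d+1)/2)!\cdot 6^{(d+1)/2}]$ built into the definition of $\mathscr O_{d+1}$. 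Because this is the only genuinely delicate ingredient and it has already been addressed, there is no real obstacle in the present proof: the argument is a one-line induction followed by summation.
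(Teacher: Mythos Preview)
Your proposal is correct and matches the paper's approach exactly: the paper simply states that the proposition follows ``from the above lemma,'' and your induction-plus-summation argument is precisely the intended unpacking of that remark. There is nothing to add or correct.
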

We point out that we need the initial data \eqref{initialdata} to construct operators $\mathscr O_d$.

{\bf Acknowledgements.}

The author would like to thank Prof. Jian Zhou for helpful discussions.

\end{document}